\theoremstyle{plain}
\newtheorem{thm}{Theorem}[section]
\newtheorem{lem}[thm]{Lemma}
\theoremstyle{definition}
\theoremstyle{remark}
\newtheorem{rem}[thm]{Remark}
\theoremstyle{plain}
\newcommand{\R}{\mathbb{R}}
\newcommand{\N}{\mathbb{N}}
\newcommand{\identity}{\mathrm{id}}
\newcommand{\ric}{\mathrm{Ric}}
\newcommand{\trace}{\mathrm{tr}}
\newcommand{\kernel}{\mathrm{ker}}
\newcommand{\Diff}{\mathrm{Diff}}
\newcommand{\image}{\mathrm{im}}
\newcommand{\diver}{\mathrm{div}}
\newcommand{\abg}[1]{|#1|_{g}}
\newcommand{\go}[1]{\/^{\scriptscriptstyle{(#1)}}\!\tilde g}
\newcommand{\eq}[1]{\begin{equation}#1\end{equation}}
\newcommand{\alg}[1]{\begin{aligned}#1\end{aligned}}
\newcommand{\bc}{\begin{cases}}
\newcommand{\ec}{\end{cases}}
\newcommand{\Chr}[3]{\Gamma^{#1}_{#2 #3}}
\newcommand{\Chrh}[3]{\hat{\Gamma}^{#1}_{#2 #3}}
\newcommand{\p}[1]{\partial_{#1}}
\newcommand{\Ab}[1]{\|#1\|}
\newcommand{\ga}{\gamma}
\newcommand{\Si}{\Sigma}
\newcommand{\La}{\Lambda}
\newcommand{\De}{\Delta}
\newcommand{\na}{\nabla}
\newcommand{\mcl}[1]{\mathcal{#1}}
\newcommand{\mcr}[1]{\mathscr{#1}}
\renewcommand{\title}[1]{{\bfseries #1}\par}
\renewcommand{\author}[1]{\medskip{#1}\par\smallskip}
\numberwithin{equation}{section}
\begin{document}

\begin{center}
\title{\Large On the CMC--Einstein--$\Lambda$ flow }
\vspace{3mm}
\author{ David Fajman and Klaus Kröncke}
\date\today
\Footnotetext{}{1991 \emph{Mathematics Subject Classification.} 58J45,53C25,83C05.}
\Footnotetext{}{\emph{Key words and phrases.} Nonlinear Stability, General Relativity, Einstein metrics, Einstein flow.}
\vspace{1mm}
\end{center}

\begin{center}
May 2018
\end{center}
\vspace{2mm}

\begin{abstract}
We complement a recent work on the stability of fixed points of the CMC-Einstein-$\Lambda$ flow. 
In particular, we modify the utilized gauge for the Einstein equations and remove a restriction on the fixed points whose stability we are able to prove by this method, and thereby generalize the stability result. In addition, we consider the notion of the \emph{reduced Hamiltonian}, originally introduced by Fischer and Moncrief for the standard CMC-Einstein flow. For the analog version of the flow in the presence of a positive cosmological constant we identify the stationary points and relate them to the long-time behavior of the flow on manifolds of different Yamabe types. This entails conjectures on the asymptotic behaviour and potential attractors.
\end{abstract}

\section{Introduction}
Determining the long-time behaviour of the Einstein flow on 
compact manifolds without boundary is one of the central objectives of mathematical cosmology. In the presence of a positive cosmological constant\footnote{The positive cosmological constant $\Lambda$ can either be added directly into the Einstein equations via $R_{\mu\nu}-\frac12Rg_{\mu\nu}+\Lambda g_{\mu\nu}=T_{\mu\nu}$, as done in the present paper, or modelled by a scalar field with suitable potential as in \cite{Ri08}. Both models lead to identical long-time behaviour of the spacetime geometry in expanding direction, which is compatible with the observed accelerated expansion.}, this behaviour is well understood for the class of initial data close to homogeneous cosmological models due to the work of Ringstr\"om \cite{Ri08}. This behaviour is independent of the topology of the spatial hypersurfaces of spacetime due to the fast expansion rate in the case $\La>0$, which causes a localization of perturbations in space.

In a recent paper \cite{FK15}, the authors provide an alternative proof of this stability result using the constant-mean-curvature-spatial-harmonic gauge, originally introduced in the work of Andersson and Moncrief \cite{AM03,AM11}. This approach leads to a very concise proof by a suitably arranged energy estimate. However, this method does not cover the full result of Ringstr\"om but only those cases where the perturbed spacetime allows for a CMC foliation with the mean curvature being a time-function. A fundamental difference to the generalized harmonic gauge, used by Ringstr\"om, is the fact that the CMCSH gauge casts the Einstein equations into an elliptic-hyperbolic form. The nature of the elliptic operators determining the equations for lapse and shift depends crucially on the global geometry of spatial hypersurfaces of the foliation. While in the case of negative Einstein geometries these operators are generally invertible, certain restrictions are required in the case of positive curvature due to the operator defining the shift vector field. For the spatial harmonic gauge, defining this operator, it is necessary that the background Riemannian Einstein manifold $(M,\gamma)$ does not admit Killing fields and the Laplacian $\Delta_\gamma$ does not have $-2/nR(\gamma)$ as an eigenvalue. 

It turns out that the assumptions of the stability theorem can be relaxed to include those cases when a modification of the spatial harmonic gauge is employed. This \emph{modified spatial harmonic gauge} is introduced in this paper.  

Another question concerning the CMC-Einstein-$\Lambda$ flow regards the existence of other attractors except for the spatial Einstein geometries. This has been investigated by Fischer and Moncrief using the notion of \emph{reduced Hamiltonian} for the case of vanishing cosmological constants \cite{FM01,FM02}. In the second part of this paper we introduce an analog quantity for the case $\Lambda>0$ and analyze its behaviour along the flow. We then draw some conclusions on the possibility for existence of data not evolving to a spatial Einstein geometry.

\subsection{Modified spatial harmonic gauge}
The existence of an eigenvalue $-2/nR(\gamma)$ in the spectrum of the Laplacian associated with $\gamma$ prevents the elliptic operator in the equation of the shift vector in \eqref{CMC} to be an isomorphism. The form of this operator is a direct consequence of the spatial harmonic gauge condition. The modification of this condition that we employ takes the form 
\eq{
g^{kl}((1+\alpha)(\overline{\nabla}_kg_{il}+\overline{\nabla}_lg_{ik})-\overline{\nabla}_ig_{kl})=0,
}
where $\alpha$ is a small real parameter and $\overline{\nabla}$ is the covariant derivative of $\gamma$. Setting $\alpha=0$ recovers the spatial harmonic gauge condition.

For nontrivial $\alpha$, the elliptic operator appearing in the associated shift equation (which results from taking the time derivative of the gauge condition) depends on $\alpha$.  In particular, the eigenvalue condition to be avoided to assure isomorphy depends on $\alpha$. In case this eigenvalue condition is fulfilled one can change $\alpha$ slightly and obtain the desired non-existence of the problematic eigenvalue. 

To make this work we need however to assure that all the other effects of this change of gauge are still compatible with the well-posedness of the system and the stability analysis. This includes the proof that the decomposition of the Ricci tensor still leads to a well-defined elliptic operator and that the elliptic operator acting on the shift vector field is an isomorphism for suitable values of $\alpha$. Finally, the generalized version of the stability result is proven in Theorem \ref{main-thm}.

We emphasize that the modified spatial harmonic gauge has other potential applications for the CMCSH-Einstein flow as for instance on manifolds in the positive Yamabe class, also in the absence of a positive cosmological constant.

\subsection{Reduced Hamiltonian}
An elegant approach to the question of existence of other attractors for the CMC Einstein flow than negative Einstein geometries was introduced by Fischer and Moncrief in form of the so-called \emph{reduced Hamiltonian}. Considering a CMC-foliation of spacetime, the reduced Hamiltonian $H_R$ of a 3-dimensional spatial hypersurface $M$ is defined as its volume rescaled by the mean curvature $\tau$,
\eq{\label{red-ham-org}
H_R(M)= -\tau^3 \mathrm{vol}_g(M),
}
where $g$ is the induced Riemannian metric on $M$. The remarkable property of $H_R$ is that, for $M$ being in the negative Yamabe class, it is monotonically decreasing in time and constant if and only if $g=\frac32\tau^{-2}\gamma$, where $\gamma$ is a hyperbolic Einstein metric of scalar curvature $-1$ on $M$. For a fixed value of the mean curvature, then $H_R$ has a critical point, if considered as a functional on the reduced phase space (cf.~\cite{FM02}), at $(\gamma,0)$ unique up to isometry, where $(\gamma,0)$ corresponds to the value of the metric and the tracefree-part of the second fundamental form of the initial data set. This is moreover a strict local minimum of $H_R$ modulo isometries. For non-hyperbolic manifolds, critical points of $H_R$ do not exist. Furthermore, the infimum of the reduced Hamiltonian is directly related to the Yamabe constant of $M$.\\
For the negative Yamabe class, the behaviour of the reduced Hamiltonian implies that there are no other attractors than the negative Einstein metrics. This, however, does not imply that all initial data flows towards a negative Einstein metric as singularities may form before.

For the Einstein flow with positive cosmological constant CMC foliations can be used to prove stability of certain cosmological models as recently shown by the authors \cite{FK15}. This approach works in general when the spatial manifold is of negative Yamabe type and under certain restrictions for the reversed CMC Einstein flow when it is of positive Yamabe type. The stability results for $\Lambda>0$, however, provide no information on possible other attractors. 

In the second part of this paper, we introduce a reduced Hamiltonian for solutions to the CMC- and reversed-CMC Einstein flow with positive $\La$. For the CMC Einstein flow on manifolds in the negative Yamabe class we obtain monotonicity of the reduced Hamiltonian and constancy if and only if the metric is negative Einstein. For manifolds in the positive Yamabe class the reduced Hamiltonian is also monotonic. Moreover, it is constant if and only the flow remains in a set of positive constant scalar curvature metrics up to diffeomorphism in the same conformal class and under an additional eigenvalue condition these are homothetic positive Einstein metrics. \\
\section{Preliminaries}
We briefly recall the setup of the CMC-Einstein-$\Lambda$ flow and some explicit model solutions to illustrate the geometric context.
\subsection{The CMCSH-Einstein--$\Lambda$ flow}
We consider a space-time of the form $\R\times M$, where $M$ is a smooth compact $n$-dimensional manifold without boundary. For the Lorentzian metric
we choose the ADM-Ansatz 
\eq{\label{ADM}
\go {n+1}= -N^2dt\otimes dt+g_{ij}(dx^i+X^i dt)\otimes(dx^j+X^j dt),
}
with $\tilde{g}=(N,X,g)$ lapse function, shift vector field and the spatial metric, respectively. The CMCSH gauge with respect to a fixed Riemannian metric $\gamma$ on $M$ is realized by imposing
\eq{\label{CMCSH}\alg{
&\mathrm{tr}_g k\equiv \tau=t,\\
&g^{ij}(\Chr kij-\Chrh kij)\equiv V^k=0,
}}
where $k$ is the second fundamental form and $\Chr kij$, $\Chrh kij$ denote the Christoffel symbols w.r.t~$g$ and $\ga$, respectively. The arbitrary, fixed Riemannian metric $\gamma$ on $M$ serves here as a reference metric to impose the spatial harmonic gauge condition. In the eventual stability analysis (cf.~\cite{FK15}) this metric is then chosen to be the fixed point of the rescaled Einstein-$\Lambda$ flow on $M$.  Setting $\La=\frac{n(n-1)}{2}$ the CMCSH-Einstein-$\Lambda$ flow reads 
\eq{\label{CMC}
\alg{
R(g)-\abg{\Si}^2+\tau^2\left(\frac{n-1}{n}\right)&=n(n-1)\\
\na^i\Si_{ij}&=0\\
\p tg_{ij}&=-2N(\Si_{ij}+\tau/ng_{ij})+\mcr{L}_{X}g_{ij}\\
\p t\Si_{ij}&=N(R_{ij}+\tau \Si_{ij}-2\Si_{ik}\Si_j^k+(\tau^2/n-n)g_{ij})\\
&\quad+\mcl L_X\Si_{ij}-\frac1ng_{ij}-\frac{2N\tau}{n}\Si_{ij}-\na_i\na_jN\\
\De N&=-1 + N\Big[\abg{\Si}^2+\frac{\tau^2}{n}-n\Big]\\
\Delta X^i+R^i_{\,m}X^m-\mcl L_XV^i&=2\nabla_jN\Si^{ji}+\tau(2/n-1)\nabla^iN\\
&\quad-(2N\Si^{mn}- (\mcl L_Xg)^{mn})(\Chr imn-\hat{\Gamma}^i_{mn})
}
}
where the second fundamental form $k$ is decomposed by $k=\Si+\frac{\tau}n g$,
where $\Si$ denotes the trace-free part. Furthermore, $R(g)$ is the Ricci scalar curvature of $g$, $\mathcal L_X$ denotes the Lie derivative w.r.t.~the shift and $R_{ij}$ denotes the Ricci tensor of the metric $g$. The Laplacian $\Delta$ is defined w.r.t.~$g$.

The system \eqref{CMC} is derived as follows: The first two equations are the Hamiltonian and the momentum constraint. The third equation is a consequence of the ansatz \eqref{ADM}. The fourth equation follows from demanding Einstein's equation to hold. The fifth equation follows from taking the trace of the fourth equation and using the first equation in \eqref{CMCSH} and the Hamiltonian constraint. The sixth equation follows from taking the time derivative of the second equation in \eqref{CMCSH}.
For details, see e.\ g.\ \cite{Re08}

In the case of an reversed CMC-gauge, $t=-\tau$, which we use for spatial Einstein metrics of positive curvature, the lapse equation takes the form
\eq{\label{inverseCMC}
\De N=1 + N\Big[\abg{\Si}^2+\frac{\tau^2}{n}-n\Big].
}
The equation for the trace-free part of the second fundamental form then reads
\eq{\alg{
\p t\Si_{ij}&=N(R_{ij}+\tau \Si_{ij}-2\Si_{il}\Si_j^l+(\tau^2/n-n)g_{ij})\\
&\quad+\mcl L_X\Si_{ij}+\frac1ng_{ij}-\frac{2N\tau}{n}\Si_{ij}-\na_i\na_jN.
}}

\subsection{Model solutions}\label{ssec : 24}
Some model solutions, which are fixed points of the flow and which have been investigated regarding their future stability in \cite{FK15}, are recalled in the following. By fixed points in this context, we mean that $\Sigma=0$ and $X=0$.

In case of the CMC-gauge, $t=\tau$, the lapse equation, for $\Si=0$, is solved by
\eq{
N=\frac{n}{\tau^2-n^2}
}
and since $N>0$, $\tau^2>n^2$. Fix $\tau_0\in (-\infty,-n)\cup (n,\infty)$.
Then the physical metric is given by
\eq{\label{phys-metr}
g(\tau)=g(\tau_0)\frac{\tau_0^2-n^2}{\tau^2-n^2},
}
where the metrics have the property that the Ricci tensor is given by $R_{ij}=-\frac{n-1}{nN}g_{ij}$. 

In reversed CMC-gauge, $t=-\tau$. For $\Si=0$, then

\eq{
N=\frac{n}{n^2-\tau^2}
}
and therefore, $\tau^2<n^2$. Fix $\tau_0\in (-n,n)$. Then the solution for the physical metric is again \eqref{phys-metr}, and
$R_{ij}=\frac{n-1}{nN}g_{ij}$.

Observe that in the case $\tau=n$, the equation on the Lapse function can not be solved.



\section{The modified Harmonic gauge}
In this section we introduce the modified \emph{constant-mean-curvature-spatial-harmonic gauge} and use it to generalize Theorem 1.2 from \cite{FK15}. The main result is provided in Theorem \ref{main-thm}.
\subsection{The modified spatial harmonic gauge}
Recall that the spatial harmonic gauge condition with respect to a fixed Riemannian metric $\gamma$ is given by
\begin{align}g^{ij}(\Gamma_{ij}^k-\overline{\Gamma}_{ij}^k)=0,
\end{align}
where $\Gamma^{ij}_k,\overline{\Gamma}_{ij}^k$ are the  Christoffel symbols of $g,\gamma$, respectively.
This can also be reformulated as
\begin{align}g^{kl}(\overline{\nabla}_kg_{il}+\overline{\nabla}_lg_{ik}-\overline{\nabla}_ig_{kl})=0.
\end{align}
Here and throughout this section, $\overline{\nabla}$ is the covariant derivative of $\gamma$.
We will now introduce $\alpha\in\R$ as a parameter and impose the gauge condition
\begin{align}
g^{kl}((1+\alpha)(\overline{\nabla}_kg_{il}+\overline{\nabla}_lg_{ik})-\overline{\nabla}_ig_{kl})=0.
\end{align}
This ansatz is technically motivated and allows for a direct control of the corresponding kernel of the elliptic operator on the left-hand side of \eqref{shift-eq} acting on the shift vector field via the parameter $\alpha$.
Let $\mathcal{M}$ be the set of smooth Riemannian metrics on $M$.
Fix a metric $\gamma\in \mathcal{M}$ and let $\mathcal{H}_\alpha$ be the set of metrics satisfying this modified gauge condition, i.e.\
\begin{align}\label{SHmetrics}
\mathcal{H}_{\alpha}=\left\{g\in\mathcal{M}\mid (V_{\alpha,g})_i:=g^{kl}((1+\alpha)(\overline{\nabla}_kg_{il}+\overline{\nabla}_lg_{ik})-\overline{\nabla}_ig_{kl})=0\right\},
\end{align}
\subsection{Expansion of the Ricci tensor in the modified Harmonic gauge}
If $g\in \mathcal{H}_{\alpha}$, we have
\begin{align}\begin{split}
0&=\overline{\nabla}_i(V_{\alpha,g})_j+\overline{\nabla}_j(V_{\alpha,g})_i\\
&=g^{kl}[2(1+\alpha)(\overline{\nabla}^2_{jk}g_{il}+\overline{\nabla}^2_{ik}g_{jl})-(\overline{\nabla}^2_{ji}g_{kl}+\overline{\nabla}^2_{ij}g_{kl})]+g^{-1}*g^{-1}*\overline{\nabla}g*\overline{\nabla}g\\
\end{split}
\end{align}
where $h=g-\gamma$. By \cite[p.\ 234, (53)]{Sh89}, we have
\begin{align}\begin{split}
-2R_{ij}&=g^{kl}(\overline{\nabla}^2_{kl}g_{ij}+\overline{\nabla}^2_{ij}g_{kl}
-\overline{\nabla}^2_{il}g_{jk}-\overline{\nabla}^2_{kj}g_{il})\\
&\quad
-g^{kl}g_{jp}\tilde{g}^{pq}\overline{R}_{ikql}-\overline{R}_{ij}+g^{-1}*g^{-1}*\overline{\nabla}g*\overline{\nabla}g\\
&=g^{kl}(\overline{\nabla}^2_{kl}h_{ij}+\overline{\nabla}^2_{ij}h_{kl}
-\overline{\nabla}^2_{il}h_{jk}-\overline{\nabla}^2_{kj}h_{il})\\
\end{split}
\end{align}
where $\overline{R}_{ikql},\overline{R}_{ij}$ denote the Riemann curvature and the Ricci curvature of $\gamma$, respectively.
Adding up these two equation shows that
\begin{align}\begin{split}
-2R_{ij}&=g^{kl}(\overline{\nabla}^2_{kl}g_{ij}+\alpha(\overline{\nabla}^2_{jk}g_{il}+\overline{\nabla}^2_{ik}g_{jl})+\overline{\nabla}^2_{ij}g_{kl}-\overline{\nabla}^2_{ji}g_{kl})\\
&\quad-g^{kl}g_{jp}\tilde{g}^{pq}\overline{R}_{ikql}-\overline{R}_{ij}+g^{-1}*g^{-1}*\overline{\nabla}g*\overline{\nabla}g.
\end{split}
\end{align}

Similarly as in \cite[p.\ 234, (55)]{Sh89}, one obtains after commuting covariant derivatives
\begin{align}\begin{split}
-2R_{ij}&=g^{kl}\overline{\nabla}^2_{kl}g_{ij}+\alpha g^{kl}(\overline{\nabla}^2_{jk}g_{il}+\overline{\nabla}^2_{ik}g_{jl})\\
&\quad
-g^{kl}g_{ip}\gamma^{pq}\overline{R}_{jkql}-g^{kl}g_{jp}\gamma^{pq}\overline{R}_{ikql}
+g^{-1}*g^{-1}*\overline{\nabla}g*\overline{\nabla}g.
\end{split}
\end{align}
The consequence is the following 
\begin{lem}\label{Ricciexpansion}
	Let $(M,\gamma)$ be an Einstein manifold and $g\in \mathcal{H}_\alpha$. Then its Ricci tensor admits the expansion
	\begin{align}
	R_{ij}=\frac{1}{2}\mathcal{L}_{g,\gamma,\alpha}(g-\gamma)+J_{ij}
	\end{align}
	where $\mathcal{L}_{g,\gamma,\alpha}$ is a second order differential operator acting on symmetric $2$-tensors, defined by
	\begin{align}
	\mathcal{L}_{g,\gamma,\alpha}h_{ij}=-\frac{1}{\mu_g}\overline{\nabla}_k(g^{kl}\mu_g\overline{\nabla}_lh_{ij})-\frac{\alpha}{\mu_g}[\overline{\nabla}_j(g^{kl}\mu_g\overline{\nabla}_kh_{il})+\overline{\nabla}_i(g^{kl}\mu_g\overline{\nabla}_kh_{jl})]-2\overline{R}_{ikjl}h^{kl}
	\end{align}
	and $J$ is a symmetric tensor depending on $g,\gamma$ which satisfies the estimate
	\begin{align}
	\left\|J\right\|_{H^s}\leq C\left\|g-\gamma\right\|_{H^{s+1}}^2
	\end{align}
	for every $s\in \N_0$. Here, $H^s$ denotes the Sobolev norm of order $s$.
	\end{lem}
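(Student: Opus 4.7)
The plan is to start from the identity displayed just above the lemma statement,
\[-2R_{ij} = g^{kl}\overline{\nabla}^2_{kl}g_{ij} + \alpha g^{kl}(\overline{\nabla}^2_{jk}g_{il}+\overline{\nabla}^2_{ik}g_{jl}) - g^{kl}g_{ip}\gamma^{pq}\overline{R}_{jkql} - g^{kl}g_{jp}\gamma^{pq}\overline{R}_{ikql} + g^{-1}*g^{-1}*\overline{\nabla}g*\overline{\nabla}g,\]
and to match each block on the right-hand side against the corresponding block of $\mathcal{L}_{g,\gamma,\alpha}h$, where $h:=g-\gamma$. Since $\overline{\nabla}\gamma=0$, every $\overline{\nabla}g$ equals $\overline{\nabla}h$, so each second derivative of $g$ above is already a second derivative of $h$.

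The first step is to recast the two principal second-order blocks into divergence form. Expanding
\[\frac{1}{\mu_g}\overline{\nabla}_k(g^{kl}\mu_g\overline{\nabla}_l h_{ij}) = g^{kl}\overline{\nabla}^2_{kl}h_{ij} + \bigl(\overline{\nabla}_k g^{kl}+g^{kl}\mu_g^{-1}\overline{\nabla}_k\mu_g\bigr)\overline{\nabla}_l h_{ij},\]
and doing the analogous computation for the two $\alpha$-weighted terms, every discrepancy between divergence form and bare second-derivative form is a product of $g^{-1}$ factors with one factor of $\overline{\nabla}g$ and one of $\overline{\nabla}h$, fitting the schematic $g^{-1}*g^{-1}*\overline{\nabla}g*\overline{\nabla}g$ quadratic pattern already present in the expansion.

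The second step is to process the two Riemann contractions. Writing $g^{kl}=\gamma^{kl}-h^{kl}+O(h^2)$ and $g_{ip}\gamma^{pq}=\delta_i^{\ q}+h_i^{\ q}$, each contraction splits into a zeroth-order piece, a linear-in-$h$ piece, and a quadratic-and-higher remainder. Pair symmetry and the antisymmetries of $\overline{R}$, together with a $k\leftrightarrow l$ relabeling of the dummy indices, show that the linear pieces of the two contractions sum to $2\overline{R}_{ikjl}h^{kl}$ up to spurious contractions of the form $\overline{R}_{jq}h_i^{\ q}+\overline{R}_{iq}h_j^{\ q}$; the Einstein hypothesis $\overline{R}_{ij}=\frac{R(\gamma)}{n}\gamma_{ij}$ turns those spurious contractions into scalar multiples of $h_{ij}$ which combine with the zeroth-order piece $-2\overline{R}_{ij}$ into a multiple of $g_{ij}$, so that the block reduces to precisely $2\overline{R}_{ikjl}h^{kl}$ plus a quadratic remainder in $h$.

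Finally, set $J_{ij}$ to be the sum of (i) the divergence-form correction terms from the first step (each paired with one outstanding $\overline{\nabla}h$), (ii) the quadratic Riemann remainder from the second step, and (iii) the original $g^{-1}*g^{-1}*\overline{\nabla}g*\overline{\nabla}g$ term. By construction one then has $R_{ij}=\tfrac12\mathcal{L}_{g,\gamma,\alpha}h+J_{ij}$, and every summand of $J$ is schematically a bilinear product of two factors drawn from $\{h,\overline{\nabla}h\}$ with smooth coefficients rational in $g$ (bounded once $h$ is small) or involving $\overline{R}$. The standard Moser bilinear estimate $\|uv\|_{H^s}\leq C(\|u\|_{L^\infty}\|v\|_{H^s}+\|u\|_{H^s}\|v\|_{L^\infty})$, combined with an $H^{s_0}$-smallness on $h$ for some $s_0>n/2$ (which is the standing assumption in the stability analysis), then yields $\|J\|_{H^s}\leq C\|h\|_{H^{s+1}}^2$ for every $s\in\mathbb{N}_0$. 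The main obstacle is the second step: the Einstein assumption must be invoked exactly at the right place so that the linearization of the two Riemann contractions lines up with $-2\overline{R}_{ikjl}h^{kl}$; without it, the $\overline{R}_{iq}h_j^{\ q}$-type remnants would not be absorbable and the stated form of the linearization would fail.
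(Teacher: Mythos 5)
Your route is the same one the paper takes implicitly: start from the displayed identity $-2R_{ij}=g^{kl}\overline{\nabla}^2_{kl}g_{ij}+\alpha g^{kl}(\overline{\nabla}^2_{jk}g_{il}+\overline{\nabla}^2_{ik}g_{jl})-g^{kl}g_{ip}\gamma^{pq}\overline{R}_{jkql}-g^{kl}g_{jp}\gamma^{pq}\overline{R}_{ikql}+g^{-1}*g^{-1}*\overline{\nabla}g*\overline{\nabla}g$, rewrite the principal part in divergence form, linearize the curvature contractions, and put the rest into $J$. Steps 1 and 3 are fine (modulo the remark below). But step 2 contains a genuine non sequitur at exactly the delicate point. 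Your own computation correctly produces, besides $2\overline{R}_{ikjl}h^{kl}$, the zeroth-order piece $-2\overline{R}_{ij}$ and the spurious contractions $-\overline{R}_{jq}h_i^{\ q}-\overline{R}_{iq}h_j^{\ q}$, which under the Einstein condition combine into $-\tfrac{2R(\gamma)}{n}\,g_{ij}$. This is a term of order zero and one in $h$, not two, so it can be absorbed neither into $\mathcal{L}_{g,\gamma,\alpha}h$ nor into a $J$ satisfying $\left\|J\right\|_{H^s}\leq C\left\|g-\gamma\right\|_{H^{s+1}}^2$. Your sentence ``so that the block reduces to precisely $2\overline{R}_{ikjl}h^{kl}$ plus a quadratic remainder'' directly contradicts the clause preceding it. Carried through honestly, the computation yields
\begin{align*}
R_{ij}=\frac{R(\gamma)}{n}g_{ij}+\frac{1}{2}\mathcal{L}_{g,\gamma,\alpha}(g-\gamma)_{ij}+J_{ij},
\end{align*}
with $J$ genuinely quadratic.

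Note that this is not merely a gap in your write-up that a cleverer cancellation would close: evaluating the expansion as stated at $g=\gamma$ gives $R_{ij}(\gamma)=0+J_{ij}$ with $\|J\|\leq C\cdot 0$, i.e.\ it forces $\ric(\gamma)=0$, whereas the stability theorem applies the lemma to $\ric(\gamma)=(n-1)\gamma$. So the identity you should be proving is the corrected one above (the extra multiple of $g_{ij}$ is harmless in the application, where $R_{ij}$ only ever appears in the combination $R_{ij}+(\tau^2/n-n)g_{ij}$); you should carry the term $\tfrac{R(\gamma)}{n}g_{ij}$ explicitly rather than declare it quadratic. A minor further point on step 3: to get $\|\overline{\nabla}h*\overline{\nabla}h\|_{H^s}\leq C\|h\|_{H^{s+1}}^2$ for small $s\in\N_0$ from the Moser estimate you need $\|\overline{\nabla}h\|_{L^\infty}$ under control, i.e.\ the constant must be allowed to depend on a fixed norm $\|h\|_{H^{s_0}}$ with $s_0>n/2+1$, not just $s_0>n/2$.
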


\begin{lem}\label{lem-ell-op}
	Let $(M,\gamma)$ be an Einstein manifold and suppose that $\alpha>-\frac{1}{2}$.
Then the operator $\mathcal{L}_{g,\gamma,\alpha}$ is an elliptic operator of second order which is self-adjoint with respect to the scalar product
\begin{align}
(h,\bar{h})_{L^2(g,\gamma)}=\int_M \gamma^{ik}\gamma^{jl}h_{ij}\bar{h}_{kl} \cdot \mu_g.
\end{align}
	\end{lem}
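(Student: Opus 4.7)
The statement has two distinct parts, ellipticity and self-adjointness, which I would treat separately.

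For \textbf{ellipticity} the strategy is to compute the principal symbol and verify its invertibility on $S^2 T^\ast_x M$ for any nonzero covector $\xi$, using the hypothesis $\alpha > -1/2$. Only the second-order part of $\mathcal{L}_{g,\gamma,\alpha}$ contributes; the standard substitution $\overline{\nabla}_k \mapsto \mathrm{i}\xi_k$ yields
\begin{align*}
\sigma(\xi)h_{ij} = |\xi|_g^2\, h_{ij} + \alpha\bigl(\xi_j\, v_i + \xi_i\, v_j\bigr),\qquad v_i := g^{kl}\xi_k h_{il}.
\end{align*}
Assuming $\sigma(\xi)h = 0$, I would contract with $g^{im}\xi_m$ to obtain $(1+\alpha)|\xi|_g^2 v_j + \alpha\xi_j\,(g^{im}\xi_m v_i) = 0$, and then contract again with $g^{jn}\xi_n$ to get $(1+2\alpha)|\xi|_g^2\,(g^{jn}\xi_n v_j) = 0$. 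Under $\alpha > -1/2$ the scalar coefficient is strictly positive, so the trace $g^{jn}\xi_n v_j$ vanishes; back-substitution then forces $v \equiv 0$ and finally $h \equiv 0$, giving injectivity of the symbol and hence ellipticity.

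For \textbf{self-adjointness} I would verify $(\mathcal{L}_{g,\gamma,\alpha}h, \bar h)_{L^2(g,\gamma)} = (h, \mathcal{L}_{g,\gamma,\alpha}\bar h)_{L^2(g,\gamma)}$ by checking the three pieces of $\mathcal{L}_{g,\gamma,\alpha}$ in turn. The curvature contribution $-2\overline{R}_{ikjl}h^{kl}$ is purely algebraic and reduces to the Riemann symmetry $\overline{R}_{ikjl} = \overline{R}_{jlik}$ together with the symmetries of $\gamma$ and of $h, \bar h$. The first (Bochner-type) term has the form $\overline{\nabla}^\ast\overline{\nabla}$: the factor $g^{kl}\mu_g$ inside the outer divergence is precisely what turns $\tfrac{1}{\mu_g}\overline{\nabla}_k(g^{kl}\mu_g\,\cdot\,)$ into the formal adjoint of $\overline{\nabla}_l$ with respect to $L^2(\mu_g)$, so a single integration by parts (boundary terms vanishing on the closed manifold) produces the manifestly symmetric Dirichlet-type form $\int_M g^{kl}\gamma^{ia}\gamma^{jb}(\overline{\nabla}_k h_{ij})(\overline{\nabla}_l \bar h_{ab})\,\mu_g$.

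The gauge-correction $\alpha$-term will be the main obstacle. After using the $i \leftrightarrow j$ symmetry to combine the two contributions, one integration by parts moves $\overline{\nabla}_j$ off of $\mu_g V_i := \mu_g g^{kl}\overline{\nabla}_k h_{il}$ and onto $\gamma^{ia}\gamma^{jb}\bar h_{ab}$; the boundary term vanishes because $\gamma^{ia}\gamma^{jb}\bar h_{ab}\mu_g V_i$ is a vector density of weight one on the closed manifold. This reduces the pairing to a quadratic form in $\overline{\nabla}h$ and $\overline{\nabla}\bar h$, whose symmetry under $(h,\bar h) \leftrightarrow (\bar h,h)$ has then to be verified by relabeling of dummy indices and repeated use of $\overline{\nabla}\gamma = 0$, the symmetries of $g^{kl}$ and $\gamma^{ij}$, and the symmetries of $h,\bar h$. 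The delicate point is tracking the interplay between the inverse metric $g^{kl}$ sitting inside the operator and the $\gamma$-pairing used to define $L^2(g,\gamma)$, which is what makes this step non-trivial.
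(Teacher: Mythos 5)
Your ellipticity argument is sound and differs from the paper's: the paper pairs the principal symbol with $h$ itself and shows the resulting quadratic form is negative definite via Cauchy--Schwarz, $|h(\xi,\cdot)|_g^2\le|\xi|_g^2|h|_g^2$, whereas you prove injectivity of the symbol by contracting twice with $\xi$. Both work for $\alpha>-\tfrac12$; yours in fact gives invertibility of the symbol for every $\alpha\notin\{-1,-\tfrac12\}$, though the paper's version additionally yields the definiteness that is later used to make $-\Delta_{g,\gamma,\alpha}$ a nonnegative operator in the energy.

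The gap is in the self-adjointness of the $\alpha$-term, precisely at the step you postpone to ``relabeling of dummy indices.'' Carrying out your integration by parts (both summands of the $\alpha$-term give the same contribution after using the symmetry of $\bar{h}$) yields
\begin{align*}
2\alpha\int_M g^{kl}\,\gamma^{ip}\gamma^{jq}\,(\overline{\nabla}_k h_{il})\,(\overline{\nabla}_j \bar{h}_{pq})\,\mu_g,
\end{align*}
in which the derivative index and one tensor index of $h$ are contracted with $g^{kl}$, while the corresponding indices of $\bar{h}$ are contracted with $\gamma^{jq}$. Swapping $h$ and $\bar{h}$ swaps which factor carries the $g$-contraction and which carries the $\gamma$-contraction; no relabeling of dummy indices restores equality unless $g^{kl}$ is pointwise proportional to $\gamma^{kl}$, because the form pairs the $g$-divergence of $h$ against the $\gamma$-divergence of $\bar{h}$ and these are genuinely different operations for $g\neq\gamma$. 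So the plan cannot be completed as written. Note that the paper offers no argument here at all (it declares self-adjointness ``clear''), and your more careful attempt actually exposes that the identity is exact only at $g=\gamma$: for general $g\in\mathcal{H}_\alpha$ the symmetry defect is a bilinear form whose coefficients are controlled by $g-\gamma$, so a correct completion must either restrict to $g=\gamma$, change the pairing, or settle for (and then quantify) approximate self-adjointness.
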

\begin{proof}
	Self-adjointness is clear. To prove ellipticity, we consider the principal symbol of the operator which is
	\begin{align}
	\sigma(\mathcal{L}_{g,\gamma,\alpha},\xi)(h)_{ij}=-g^{kl}\xi_k\xi_lh_{ij}-\alpha g^{kl}(\xi_j\xi_kh_{il}+\xi_i\xi_kh_{jl}).
	\end{align}
	We have
	\begin{align}
	\langle 	\sigma(\mathcal{L}_{g,\gamma,\alpha},\xi)(h),h\rangle_g=-|\xi|_g^2|h|_g^2+2\alpha | h(\xi,.)|^2\leq -(1-2\alpha)|\xi|_g^2|h|_g^2
	\end{align}
	so that	$\sigma(\mathcal{L}_{g,\gamma,\alpha},\xi)$ is an isomorphism for $\xi\neq 0$ if $\alpha>-\frac{1}{2}$. This proves the lemma.
\end{proof}

\subsection{A slice theorem for the modified Harmonic gauge}
Our aim in this section is to prove that under certain conditions on the background metric $\gamma$, $\mathcal{H}_{\alpha}$ is a smooth submanifold of $\mathcal{M}$ and a local slice of the action of the diffeomorphism group through $\gamma$. Let $\alpha\in\R$ be fixed.
By the first variation of the Christoffel symbols (see e.g.\ \cite[Theorem 1.174]{Bes08}), the differential of the map $\Phi:g\mapsto V_{\alpha,g}$ at $\gamma$ is given by
\begin{align}
d\Phi_{\gamma}(h)_i=2(1+\alpha){\gamma}^{kl}\overline{\nabla}_kh_{li}-\overline{\nabla}_i\trace_{\gamma} h.
\end{align}
where $h$ is a symmetric 2-tensor.
\begin{lem}\label{Piso}Let $(M,\gamma)$ be an Einstein manifold such that
	$-2/n\cdot R(\ga)\frac{1+\alpha}{1+2\alpha}$ is not an eigenvalue of the Laplacian $\Delta_\ga$ on functions and $\gamma$ does not admit Killing vector fields. Then the operator
	
	\begin{align}
	P:\omega_i\mapsto \overline{\Delta} \omega_i+R_j^{i}[\ga]\omega^j+\frac{\alpha}{1+\alpha}\overline{\nabla}_i{\mathrm{div}_\gamma}\omega
	\end{align}
	is an isomorphism which preserves the decomposition
	\begin{align}\Omega^1(M)=\left\{d f\mid f\in C^{\infty}(M)\right\}\oplus \left\{ \omega\in \Omega^1(M)\mid {\diver}_{\gamma} \omega=0\right\}.
	\end{align}
\end{lem}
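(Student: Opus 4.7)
The plan is to verify two things separately: (i) that $P$ preserves the stated decomposition of $\Omega^1(M)$, and (ii) that its restriction to each summand is an isomorphism. Throughout I use the Einstein hypothesis $\overline{R}_{ij}=\frac{R(\gamma)}{n}\gamma_{ij}$, so that $R(\gamma)$ is constant and $\overline{\nabla}\mathrm{Ric}_\gamma=0$.

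For (i), I first compute $P$ on exact forms. Using the standard commutator identity $\overline{\Delta}(df) = d(\overline{\Delta} f) + \mathrm{Ric}_\gamma(df)$ together with $\mathrm{Ric}_\gamma(df) = d\!\left(\frac{R(\gamma)}{n}f\right)$, which holds on an Einstein manifold, I obtain
\begin{align*}
P(df) \;=\; d\!\left(\tfrac{1+2\alpha}{1+\alpha}\overline{\Delta}f + \tfrac{2R(\gamma)}{n}f\right) \;=:\; d(L_\alpha f),
\end{align*}
so $P$ maps exact forms to exact forms. For divergence-free $\omega$ the last summand of $P\omega$ vanishes, and an analogous computation -- using $\overline{\nabla}\mathrm{Ric}_\gamma=0$ together with the commutator of $\mathrm{div}_\gamma$ with the rough Laplacian (equivalently, the commutation of $\delta$ with the Hodge Laplacian via Bochner) -- yields $\mathrm{div}_\gamma\!\left(\overline{\Delta}\omega+\mathrm{Ric}_\gamma(\omega)\right)=0$. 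Hence the divergence-free subspace is preserved as well.

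For (ii), note that $P$ is formally self-adjoint on $\Omega^1(M)$, since each of its three summands is ($-d\delta$ for the last one), and it is elliptic provided $\alpha\notin\{-1,-\tfrac12\}$: a direct check of the principal symbol $\sigma(P,\xi)\omega = -|\xi|_\gamma^2\omega - \frac{\alpha}{1+\alpha}\langle\omega,\xi\rangle_\gamma\,\xi$ shows invertibility exactly when $\frac{1+2\alpha}{1+\alpha}\neq 0$. Thus $P$ is Fredholm of index zero and it suffices to prove injectivity on each factor. On the exact summand, $P(df)=0$ amounts to $L_\alpha f=\mathrm{const}$, which -- after subtracting a suitable constant from $f$ -- reduces to the eigenvalue equation $\overline{\Delta}\tilde f = -\frac{2R(\gamma)(1+\alpha)}{n(1+2\alpha)}\tilde f$; by the spectral hypothesis the only solution is $\tilde f\equiv 0$. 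On the divergence-free summand $P\omega = \overline{\Delta}\omega+\mathrm{Ric}_\gamma(\omega)$; pairing with $\omega$, integrating by parts to get $\int|\overline{\nabla}\omega|^2\mu_\gamma=\int\mathrm{Ric}_\gamma(\omega,\omega)\mu_\gamma$, and invoking the standard Bochner identity
\begin{align*}
\tfrac{1}{2}\int_M|\mathcal{L}_X\gamma|^2\,\mu_\gamma \;=\; \int_M\bigl(|\overline{\nabla}X|^2 + (\mathrm{div}_\gamma X)^2 - \mathrm{Ric}_\gamma(X,X)\bigr)\mu_\gamma
\end{align*}
for $X := \omega^\sharp$ (with $\mathrm{div}_\gamma X=0$) forces $\mathcal{L}_X\gamma=0$; since $\gamma$ admits no Killing fields, $X$, and hence $\omega$, vanishes.

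The main technical obstacle I anticipate is keeping the sign conventions in the two commutators (between $\overline{\Delta}$ and $d$, and between $\overline{\Delta}$ and $\mathrm{div}_\gamma$) consistent, so that the Ricci corrections combine, via the Einstein condition, into precisely the coefficient $-\frac{2R(\gamma)(1+\alpha)}{n(1+2\alpha)}$ singled out by the hypothesis, while on the divergence-free side they combine to reproduce the Bochner operator whose kernel is exactly the Killing fields.
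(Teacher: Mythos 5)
Your proposal is correct and follows essentially the same route as the paper: the computation $P(df)=d\bigl(\tfrac{1+2\alpha}{1+\alpha}\overline{\Delta}f+\tfrac{2R(\gamma)}{n}f\bigr)$ combined with the spectral hypothesis handles the exact summand, and your integrated Bochner identity on divergence-free $\omega$ is just the pairing of the paper's operator identity $P\omega=-L^{*}L\omega$ with $\omega$, both reducing the kernel to Killing fields. The only cosmetic differences are that you verify preservation of the divergence-free summand by a direct commutation instead of invoking self-adjointness, and you phrase surjectivity via the Fredholm alternative rather than self-adjointness; both are equivalent to the paper's argument.
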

\begin{proof} 
	By a standard argument using commutators of covariant derivatives, we have
	\begin{align}\begin{split}
	P\overline{\nabla}_if=\overline{\Delta} \overline{\nabla}^if+R_j^{i}\overline{\nabla}^jf+\frac{\alpha}{1+\alpha}{\overline{\nabla}}_i\overline{\Delta} f&=\frac{1+2\alpha}{1+\alpha}\overline{\nabla}^i\overline{\Delta} f+2R_j^{i}\overline{\nabla}^if\\
	&=\frac{1+2\alpha}{1+\alpha}\overline{\nabla}^i\overline{\Delta} f+\frac{2 R}{n}\cdot \overline{\nabla}^if
    \end{split}
	\end{align}
	which shows that because of the eigenvalue assumption, $P$ maps the first factor bijectively onto itself. By self-adjointness of $P$, the second factor is also preserved. We define maps $L$ and $L^*$ by 
	\begin{align}
	L: \omega\mapsto\frac{1}{2}(\overline{\nabla}_i\omega_j+\overline{\nabla}_j\omega_i),\qquad L^*:h\mapsto-\overline{\nabla}^jh_{ji}.
	\end{align}
	Note that $L^*$ is the adjoint map of $L$ with respect to the $L^2$-scalar product induced by $\ga$. Now for any one-form $\omega$ with $\diver \omega=0$, we have
	\begin{align}-(L^*L \omega)_i=
	\overline{\nabla}^j\overline{\nabla}_j\omega_i+\overline{\nabla}^j\overline{\nabla}_i\omega_j
	=\overline{\Delta} \omega_i+\overline{\nabla}^j\overline{\nabla}_i\omega_j-\overline{\nabla}_i\overline{\nabla}^j\omega_j
	=\overline{\Delta} \omega_i+R_{ij}\omega^j=(P\omega)_i.
	\end{align}
	Thus, $P\omega=0$ implies $L\omega=0$. But the kernel of $L$ is mapped isomorphically to the space of Killing vector fields by the musical isomorphism, and hence, $\omega=0$. Therefore, $P$ is injective and by self-adjointness, $P$ is also surjective. 
\end{proof}
\begin{lem}\label{splittinglemma}
	Let $(M,\gamma)$ be an Einstein manifold such that
	$-2/n\cdot R(\ga)\frac{1+\alpha}{1+2\alpha}$ is not an eigenvalue of the Laplacian $\overline{\Delta}$ and $\gamma$ does not admit Killing vector fields.
	Then, $d\Phi_{\gamma}:C^{\infty}(S^2M)\to C^{\infty}(T^*M)$ is surjective.
	Moreover, we have the splitting
	\begin{align}\label{splitting}
	C^{\infty}(S^2M)=\kernel( d\Phi_{\gamma})\oplus \image(L).
	\end{align}
	Here, $S^2M$ denotes the bundle of symmetric $2$-tensors.
\end{lem}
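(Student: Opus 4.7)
The plan is to reduce the lemma to Lemma~\ref{Piso} by establishing the factorization identity
\begin{align*}
d\Phi_\gamma \circ L = (1+\alpha)\,P
\end{align*}
on $C^\infty(T^*M)$. Since $P$ is an isomorphism under the standing hypotheses and the definition of $P$ implicitly requires $1+\alpha\neq 0$, both surjectivity of $d\Phi_\gamma$ and the directness of the decomposition will follow by formal arguments.

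First I would substitute $L\omega$ into the explicit formula for $d\Phi_\gamma$. The tensor $(L\omega)_{ij}=\tfrac{1}{2}(\overline{\nabla}_i\omega_j+\overline{\nabla}_j\omega_i)$ has trace $\diver_\gamma\omega$, so direct insertion yields
\begin{align*}
d\Phi_\gamma(L\omega)_i = (1+\alpha)\,\gamma^{kl}\bigl(\overline{\nabla}_k\overline{\nabla}_l\omega_i + \overline{\nabla}_k\overline{\nabla}_i\omega_l\bigr) - \overline{\nabla}_i\diver_\gamma\omega.
\end{align*}
Commuting $\overline{\nabla}_k\overline{\nabla}_i$ in the second summand by means of the contracted Ricci identity $\gamma^{kl}\overline{\nabla}_k\overline{\nabla}_i\omega_l = \overline{\nabla}_i\diver_\gamma\omega + R_{ij}\omega^j$ (precisely the commutator computation already carried out in the proof of Lemma~\ref{Piso}) and collecting terms should give
\begin{align*}
d\Phi_\gamma(L\omega)_i = (1+\alpha)\Bigl[\overline{\Delta}\omega_i + R_{ij}\omega^j + \tfrac{\alpha}{1+\alpha}\overline{\nabla}_i\diver_\gamma\omega\Bigr] = (1+\alpha)(P\omega)_i,
\end{align*}
which is the desired identity.

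With this identity in hand, Lemma~\ref{Piso} implies that $d\Phi_\gamma\circ L$ is an isomorphism of $C^\infty(T^*M)$. Surjectivity of $d\Phi_\gamma$ is then immediate: for any $\eta\in C^\infty(T^*M)$, the one-form $\omega:=(1+\alpha)^{-1}P^{-1}\eta$ satisfies $d\Phi_\gamma(L\omega)=\eta$. For the splitting, given $h\in C^\infty(S^2M)$ I would set $\omega:=(1+\alpha)^{-1}P^{-1}(d\Phi_\gamma h)$; then $d\Phi_\gamma(h-L\omega)=0$ and
\begin{align*}
h = (h-L\omega) + L\omega \in \kernel(d\Phi_\gamma) + \image(L).
\end{align*}
The sum is direct: if $L\omega\in\kernel(d\Phi_\gamma)$, then $(1+\alpha)P\omega=0$, so $\omega=0$ by Lemma~\ref{Piso}, and hence $L\omega=0$.

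The only genuine obstacle is the bookkeeping in the commutator computation of step one; once the factorization $d\Phi_\gamma\circ L=(1+\alpha)P$ is in place, the surjectivity and direct-sum statements are purely formal consequences of the invertibility of $P$.
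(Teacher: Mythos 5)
Your proposal is correct and follows essentially the same route as the paper: both rest on the identity $d\Phi_{\gamma}\circ L=(1+\alpha)P$ proved by commuting covariant derivatives, and then invoke the invertibility of $P$ from Lemma~\ref{Piso} to get surjectivity and to produce, for each $h$, the unique $\omega$ with $h-L\omega\in\kernel(d\Phi_{\gamma})$. Your explicit verification that the sum is direct (via $L\omega\in\kernel(d\Phi_{\gamma})\Rightarrow(1+\alpha)P\omega=0\Rightarrow\omega=0$) is a detail the paper leaves implicit, but otherwise the arguments coincide.
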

\begin{proof}
	A computation using commuting covariant derivatives proves 
	\begin{align}\label{bochner}
	d\Phi_{\gamma}\circ L(\omega)_i=(1+\alpha)P\omega_i,
	\end{align}
    so that the first assertion follows from Lemma \ref{Piso}.
    Let now $h\in C^{\infty}(S^2M)$. Again by Lemma \ref{Piso} there exists a unique solution $\omega$ of the equation 
    \begin{align}
    d\Phi_{\gamma}(h)=(1+\alpha)P\omega
    \end{align}
    so that $h-L\omega\in \ker d\Phi_{\gamma}$ by \eqref{bochner} which proves the second assertion.
\end{proof}
\noindent For our purposes, it is more convenient to work in neighbourhoods with Sobolev regularity. We therefore use $H^s$-norms with $s>\frac{n}{2}+1$ for the following theorem. We remark that the above lemmas also hold, if we descend to $H^s$-regularity. Let $\mathcal{M}^s$ be the space of $H^s$-metrics on $M$ and let $\mathcal{H}^s_{\alpha}$ be the set of all $g\in \mathcal{M}^s$ satisfying the condition in \eqref{SHmetrics}. 
\begin{thm}\label{slicethm}
	Let $(M,\gamma)$ be an Einstein manifold such that
	$-2/n\cdot R(\ga)\frac{1+\alpha}{1+2\alpha}$ is not an eigenvalue of the Laplacian $\overline{\Delta}_\ga$ and $\gamma$ does not admit Killing vector fields. Then in a small $H^s$-neighbourhood $\mathcal{U}\subset \mathcal{M}^s$ of $\gamma$, $\mathcal{H}^s$ is a smooth submanifold of $\mathcal{M}^s$ with tangent space
	\begin{align}
	T_{\gamma}\mathcal{H}^s_{\alpha}=\left\{h\in H^s(S^2M)\mid
	d\Phi_{\gamma}(h)_i=2(1+\alpha)\overline{\nabla}^j h_{ji}-\overline{\nabla}_i\trace_{\gamma} h =0\right\}.
	\end{align}
	Moreover, for any $g\in\mathcal{U}$ there exists an isometric metric $\tilde{g}\in\mathcal{H}^s_{\alpha}$ which is $H^s$-close to $\gamma$, i.e.\ there exists $\varphi\in H^s(\Diff(M))$ such that $g=\varphi^*\tilde{g}$.
\end{thm}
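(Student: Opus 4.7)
The plan is to deduce both assertions from the infinitesimal splitting in Lemma \ref{splittinglemma} by two applications of the Banach space implicit function theorem.

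\emph{Step 1 (submanifold structure).} Consider the smooth map
\begin{align*}
\Phi:\mathcal{M}^s\to H^{s-1}(T^*M),\qquad g\mapsto V_{\alpha,g},
\end{align*}
whose zero set is $\mathcal{H}^s_\alpha$. For $s>\frac{n}{2}+1$ the space $H^s$ is a Banach algebra, so $\Phi$ is smooth as a map of Hilbert manifolds. Its linearization at $\gamma$ is the operator $d\Phi_\gamma$ written in the statement of the theorem, and by the proof of Lemma \ref{splittinglemma} it is surjective with closed complement $\image L$ in $H^s(S^2M)$; the extension from smooth data to Sobolev data is routine since the operator $P$ of Lemma \ref{Piso} is elliptic and self-adjoint, and is therefore an isomorphism $H^{s+1}(T^*M)\to H^{s-1}(T^*M)$ by elliptic regularity. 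Consequently $d\Phi_\gamma$ is a split surjection of Hilbert spaces, and the implicit function theorem identifies $\mathcal{H}^s_\alpha$ as a smooth Hilbert submanifold of $\mathcal{M}^s$ in some neighborhood $\mathcal{U}$ of $\gamma$ with the claimed tangent space.

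\emph{Step 2 (slice property).} Introduce the action map
\begin{align*}
\Psi:H^{s+1}(\Diff(M))\times\mathcal{H}^s_\alpha\to \mathcal{M}^s,\qquad \Psi(\varphi,\tilde g)=\varphi^*\tilde g,
\end{align*}
where the extra derivative on $\varphi$ ensures that the pullback still lies in $H^s$. Its differential at $(\identity,\gamma)$ is
\begin{align*}
d\Psi_{(\identity,\gamma)}(X,h)=\mathcal{L}_X\gamma+h=2L(X)+h,
\end{align*}
with $X$ identified with its metric dual via $\gamma$. By the splitting $H^s(S^2M)=\kernel d\Phi_\gamma\oplus\image L$ from Lemma \ref{splittinglemma}, and since $L$ has trivial kernel under the assumption that $\gamma$ carries no Killing fields (as exploited in the proof of Lemma \ref{Piso}), this differential is an isomorphism of Hilbert spaces. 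The inverse function theorem then produces a local inverse of $\Psi$ near $(\identity,\gamma)$, so that every $g\in\mathcal{U}$ may be uniquely written as $\varphi^*\tilde g$ with $\varphi\in H^{s+1}(\Diff(M))$ close to $\identity$ and $\tilde g\in \mathcal{H}^s_\alpha$ close to $\gamma$; this is the isometric representative asserted in the theorem.

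The main technical hurdle is verifying that the splitting and surjectivity results of Lemmas \ref{Piso} and \ref{splittinglemma}, stated in the smooth category, descend cleanly to the $H^s$-topology. This is a consequence of standard elliptic theory: $P$ is a second-order elliptic self-adjoint operator on an Einstein background with trivial kernel under the hypotheses, hence is an isomorphism between the natural Sobolev spaces, and this transfers the algebraic splitting to a topological direct sum decomposition of $H^s(S^2M)$. Beyond this bookkeeping, the argument is essentially two textbook applications of the implicit and inverse function theorems in Hilbert-manifold form.
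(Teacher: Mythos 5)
Your proposal is correct and follows essentially the same route as the paper: the submanifold structure is obtained from the surjectivity of $d\Phi_\gamma$ established in Lemma \ref{splittinglemma}, and the slice property from the inverse/implicit function theorem applied to the action map $\Psi(\varphi,\tilde g)=\varphi^*\tilde g$, whose differential at $(\identity,\gamma)$ is an isomorphism by the splitting \eqref{splitting} together with the absence of Killing fields. Your use of $H^{s+1}(\Diff(M))$ in place of the paper's $H^s(\Diff(M))$ is only a minor (and arguably more careful) regularity bookkeeping choice.
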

\begin{proof}
	The first assertion follows from the first assertion of Lemma \ref{splittinglemma}.
	The second assertion follows from the implicit function theorem for Banach manifolds applied to the map
	\begin{align}\Psi:\mathcal{H}^s_{\alpha}\times  H^s(\Diff(M))\to \mathcal{M}^s
	\end{align}
	given by $\Psi(g,\varphi)=\varphi^*g$.
	Since there are no Killing fields, $d\Psi_{(\gamma,\identity)}$ is injective and its image is
	\begin{align}
	\image(d\Psi_{(\gamma,\identity)})=T_{\gamma}\mathcal{H}^s_{\alpha}\oplus\left\{\mathcal{L}_X\gamma\mid X\in H^s(TM)\right\}=\kernel( d\Phi_{\gamma})\oplus\image(L)
	\end{align} 
	which equals $H^s(S^2M)=T_{\gamma}\mathcal{M}^s$ by \eqref{splitting}.
	Therefore, $\Psi$ is a diffeomorphism from a $H^s$-neighbourhood of $(\gamma,\identity)$ in $\mathcal{H}^s_{\alpha}\times  H^s(\Diff(M))$ to a $H^s$-neighbourhood of $\gamma$ in $\mathcal{M}^s$.
\end{proof}
\begin{rem}
	The assertions of Theorem \ref{slicethm} hold for any Riemannian metric $\gamma$ where the operator $P$ is an isomorphism.
\end{rem}
\subsection{The CMC-Einstein flow with modified Harmonic gauge}
In the CMCMSH (constant mean curvature modified spatial harmonic) gauge
\eq{\alg{
		&\mathrm{tr}_g k\equiv \tau=-t,\quad g\in\mathcal{H}_{\alpha},
}}
where $\Chr kij$, $\Chrh kij$ denote the Christoffel symbols w.r.t~$g$ and $\ga$, respectively, with positive cosmological constant $\La=\frac{n(n-1)}{2}$ reads as \eqref{CMC} with the shift equation replaced by
\eq{\label{shift-eq}
	\alg{
				\Delta X^i+R^i_{\,m}X^m+\frac{\alpha}{1+\alpha}\nabla^i\nabla_jX^j&=2\nabla_jN\Si^{ji}+\tau(\frac{2}{n}-\frac{1}{1+\alpha})\nabla^iN\\
		&\quad
		-(2N\Si- (\mcl L_Xg)*(\Gamma -\hat{\Gamma}).
	}
}
The only equation that differs from the standard CMCSH Einstein flow case is this one which is obtained by differentiating the gauge condition in time and using the evolution equation on $g$.
Note that the left hand side of the equation on $X$ defines an operator which is a perturbation of the operator $P$ in Lemma \ref{Piso}. Thus, it is also an isomorphism under the conditions of this lemma, provided that $g$ is close enough to $\gamma$.
Recall also that in this gauge, the Ricci tensor can be expanded as in Lemma \ref{Ricciexpansion}.
Using the CMCMSH Einstein flow, one is now able to prove the following theorem:
\begin{thm}\label{main-thm}
	Let $M$ be a smooth compact n-dimensional manifold ($n\geq 2$) without boundary and $\gamma$ be an Einstein metric satisfying ~$\ric(\ga)=(n-1)\ga$ which does not admit Killing vector fields . Then for $s>n/2+2$, $s'>n/2+s$ and $\varepsilon>0$ there exists a $\delta(\varepsilon)>0$ s.t.~for initial data $(g_0,k_0)$ satisfying
	\eq{
		\left\| g_0-\gamma\right\|_{H^{s'}}+\left\|k_0\right\|_{H^{s'-1}}<\delta
	}  
	its maximal globally hyperbolic development under the Einstein flow with positive cosmological constant $\Lambda=\frac{n(n-1)}{2}$  can be globally foliated by CMC-hypersurfaces $M_t$, $t\in\R$ such that the induced metrics $g_t$ satisfy
	\eq{
		\left\|\cosh^{-2}(t) g_t-\gamma\right\|_{H^{s}}<\varepsilon.
	}  
In particular, all corresponding homogeneous solutions are orbitally stable and the future-  and past developments of small perturbations are future- and past geodesically complete, respectively.
\end{thm}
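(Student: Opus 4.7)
The plan is to upgrade the stability argument of \cite{FK15} by replacing the standard CMCSH gauge with the CMCMSH gauge introduced above, with the parameter $\alpha$ chosen to defeat the eigenvalue obstruction on the shift equation. The first step is the choice of $\alpha$: because $M$ is compact, the spectrum of $\overline{\Delta}_\gamma$ is discrete, while the map $\alpha\mapsto -\tfrac{2}{n}R(\gamma)\cdot\tfrac{1+\alpha}{1+2\alpha}$ is a non-constant continuous function of $\alpha$ near $0$. Hence we may pick $\alpha\in(-\tfrac{1}{2},\tfrac{1}{2})$ arbitrarily close to $0$ so that the critical value avoids the spectrum. For such $\alpha$, Lemma \ref{Piso}, the slice theorem \ref{slicethm} and the ellipticity of $\mathcal{L}_{g,\gamma,\alpha}$ (Lemma \ref{lem-ell-op}) all apply, and the elliptic operator on the left-hand side of \eqref{shift-eq} is an isomorphism on a sufficiently small $H^s$-neighbourhood of $\gamma$ by a perturbation argument.

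Next, given $(g_0,k_0)$ with $\|g_0-\gamma\|_{H^{s'}}+\|k_0\|_{H^{s'-1}}<\delta$, Theorem \ref{slicethm} provides $\varphi\in H^{s'}(\Diff(M))$ such that $\tilde g_0:=\varphi^*g_0\in\mathcal{H}^{s'}_\alpha$ is still $H^{s'}$-close to $\gamma$; pulling back $k_0$ under the same $\varphi$ yields a gauge-adapted initial data set solving the constraints (which are diffeomorphism invariant). We feed this into the evolution system \eqref{CMC} with \eqref{shift-eq} in place of the original shift equation. By Lemma \ref{Ricciexpansion}, the $\Sigma$-equation becomes a hyperbolic-type evolution for $h=g-\gamma$ with spatial principal part $\tfrac{1}{2}\mathcal{L}_{g,\gamma,\alpha}h$, the lapse equation is strictly elliptic and uniquely solvable as in \cite{FK15}, and the shift operator is an isomorphism by the first paragraph. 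This elliptic-hyperbolic system admits local-in-time well-posedness in the $H^s$-scale by standard theory.

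The global part of the proof follows the strategy of \cite{FK15}. One constructs an energy $E_{(s)}$ of order $s$ built from the $H^s$-norm of $\cosh^{-2}(t)g_t-\gamma$ together with the appropriate rescaling of $\Sigma$, using $\mathcal{L}_{g,\gamma,\alpha}$ as the underlying spatial operator. Self-adjointness and uniform ellipticity of $\mathcal{L}_{g,\gamma,\alpha}$ (Lemma \ref{lem-ell-op}) render the quadratic form coercive, and since the spectrum of the linearised operator varies continuously in $\alpha$, the spectral gap exploited in \cite{FK15} is preserved for $|\alpha|$ small. Combined with the exponential damping produced by the $\Lambda>0$ background after $\cosh^{-2}(t)$-rescaling and the quadratic error bound $\|J\|_{H^s}\leq C\|g-\gamma\|_{H^{s+1}}^{2}$ from Lemma \ref{Ricciexpansion}, one arrives at a differential inequality of the schematic form
\begin{equation*}
\frac{d}{dt}E_{(s)}\leq -cE_{(s)}+CE_{(s)}^{3/2},
\end{equation*}
which closes by the standard continuation/bootstrap argument for $\delta$ sufficiently small, yielding global existence of the CMC foliation and the claimed $\varepsilon$-bound on $\|\cosh^{-2}(t)g_t-\gamma\|_{H^s}$ for all $t\in\R$. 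Orbital stability of the homogeneous model solutions and past/future geodesic completeness are then read off exactly as in \cite{FK15}.

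The main obstacle is the last step: one must verify that the $\alpha$-dependent cross terms in $\mathcal{L}_{g,\gamma,\alpha}$, namely $\overline{\nabla}_j(g^{kl}\mu_g\overline{\nabla}_kh_{il})$ and its symmetrisation, together with the new $\tfrac{\alpha}{1+\alpha}\nabla^i\nabla_jX^j$ contribution to \eqref{shift-eq}, do not destroy the sign structure driving the $-cE_{(s)}$ term in the energy identity. This reduces to treating $\mathcal{L}_{\gamma,\gamma,\alpha}-\mathcal{L}_{\gamma,\gamma,0}$ as a small symmetric perturbation on the subspace complementary to the gauge orbit, where positivity was available in \cite{FK15}; continuity of the spectrum under this perturbation together with smallness of $\alpha$ preserves the required coercivity, and the analogous statement for the shift perturbation is controlled by the isomorphism established in Lemma \ref{Piso}. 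Once these points are settled, the remaining estimates transfer essentially verbatim from \cite{FK15}.
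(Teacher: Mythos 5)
Your overall strategy (perturb $\alpha$ off $0$ to move the critical value $-\tfrac{2}{n}R(\gamma)\tfrac{1+\alpha}{1+2\alpha}$ off the discrete spectrum, gauge the initial data via the slice theorem, run an $\alpha$-modified energy estimate and bootstrap) is the paper's strategy. But the key analytic step you propose is not the one the paper uses, and in the form you state it, it fails. You claim a differential inequality $\tfrac{d}{dt}E_{(s)}\leq -cE_{(s)}+CE_{(s)}^{3/2}$, which would force $E_{(s)}\to 0$, i.e.\ $\cosh^{-2}(t)g_t\to\gamma$. That is strictly stronger than the theorem's conclusion (orbital stability, $\varepsilon$-closeness only) and is false in general: the fixed points of Section \ref{ssec : 24} come in a family, and a small perturbation of $(\gamma,0)$ inside that family produces a rescaled solution converging to a nearby metric distinct from $\gamma$, so the portion of the energy measuring $\|g-\gamma\|_{H^s}$ cannot decay. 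The paper's estimate \eqref{en-est} instead has the form $\partial_{\mathrm T}\mathbf E_s\leq \tfrac{C(\varepsilon)}{\cosh(\mathrm T)}\mathbf E_s$ with a \emph{positive} but time-integrable coefficient; the argument closes by Gronwall using $\int_{\R}\cosh^{-1}(\mathrm T)\,d\mathrm T<\infty$, not by a damping term.

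Relatedly, your coercivity mechanism imports a hypothesis that the theorem does not make. You build the energy from $\mathcal{L}_{g,\gamma,\alpha}$ (which contains the curvature term $-2\overline{R}_{ikjl}h^{kl}$) and then appeal to a ``spectral gap'' of the linearised operator on the complement of the gauge orbit, preserved under small $\alpha$. No such gap is assumed here: the only hypotheses on $\gamma$ are $\ric(\gamma)=(n-1)\gamma$ and absence of Killing fields, and nothing is assumed about positivity of the Einstein operator on $TT$-tensors (compare the discussion of the reduced Hamiltonian, where that positivity is an \emph{additional} condition singling out attractors). The paper avoids this by defining $\mathbf E_s$ with the operator $\Delta_{g,\gamma,\alpha}$, the divergence-form part \emph{without} the curvature term, whose nonnegativity for $\alpha>-\tfrac12$ follows from integration by parts exactly as in the symbol computation of Lemma \ref{lem-ell-op} and requires no spectral input; equivalence of $\mathbf E_s$ with the $H^s\times H^{s-1}$-norm then comes from ellipticity alone. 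To repair your argument you should replace the $-cE_{(s)}$ term by the integrable-coefficient bound and drop the spectral-gap requirement in favour of the nonnegativity of $-\Delta_{g,\gamma,\alpha}$.
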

Observe that the theorem is almost the same as \cite[Theorem 1.2]{FK15} but we got rid of the condition that $-2(n-1)$ is not an eigenvalue of the Laplacian.
\begin{proof}[Sketch of proof]
	Pick an $\alpha>-\frac{1}{2}$ such that $-2/n\cdot R(\ga)\frac{1+\alpha}{1+2\alpha}$ is not an eigenvalue of the Laplacian $\Delta_\ga$ on functions. Then we have well-posedness for the CMCMSH-Einstein flow with respect to this parameter $\alpha$. From now on, the procedure is as in [FaKr15] we rescale the solution of the Einstein flow and call the rescaled solution $(g,\Sigma,N,X)$. Then for $s>n/2+1$ we define the main total energy.
\begin{align}\begin{split}
				\mathbf E_s(g,\Sigma)&\equiv \Ab{g-\gamma}_{L^2(\gamma)}^2+\sum_{k=0}^{s-1}(\Sigma,(-\Delta_{g,\gamma,\alpha})^k\Sigma)_{L^2(g,\ga)}\\&\quad+\frac14\sum_{k=1}^s(g-\gamma,(-\Delta_{g,\gamma,\alpha})^k(g-\gamma))_{L^2(g,\ga)}
				\end{split}
				\end{align}
Here,
\begin{align}
	\Delta_{g,\gamma,\alpha}h_{ij}=\frac{1}{\mu_g}\overline{\nabla}_k(g^{kl}\mu_g\overline{\nabla}_lh_{ij})+\frac{\alpha}{\mu_g}[\overline{\nabla}_j(g^{kl}\mu_g\overline{\nabla}_kh_{il})+\overline{\nabla}_i(g^{kl}\mu_g\overline{\nabla}_kh_{jl})],
\end{align}
which is an elliptic operator due to Lemma \ref{lem-ell-op} and the choice of $\alpha>-\tfrac12$.	
Note that the energy is equivalent to the $H^s\times H^{s-1}$-norm of $(g,\Sigma)$ since $	-\Delta_{g,\gamma,\alpha}$ is a nonnegative operator. Now if $s>n/2+1$ and $(g,\Sigma)\in H^s\times H^{s-1}$ be a solution of the rescaled CMCMSH Einstein flow, there exists an $\varepsilon>0$ such that for
\eq{
	(g,\Si)\in\mcl B^s_{\varepsilon}(\ga)\times\mcl B^{s-1}_{\varepsilon}(0),
}
the estimate 
\eq{\label{en-est}
	\partial_{\mathrm T} \mathbf E_s(g,\Sigma)\leq \frac{C(\varepsilon)}{\cosh (\mathrm T)}\mathbf E_s(g,\Sigma)
}
holds. The other parts of the proof are exactly as in \cite{FK15}.
\end{proof}

\section{Reduced Hamiltonian}\label{sec : redham}
In the following section we introduce a \emph{reduced Hamiltonian} for the CMC-Einstein-$\Lambda$ flow and prove its monotonicity along the flow. Results of this type are interesting due to the fact that they hold for arbitrary, not necessarily small data. They might therefore be relevant in a study of large initial data, which is in a reasonable sense far from the background solutions. 

\subsection{Monotonicity and stationary points}
\begin{thm}
Let $g_t$ be a solution of the CMC Einstein flow with positive cosmological constant $\Lambda=\frac{n(n-1)}{2}$ which is expanding (i.e. $\tau<0$). Then, if $\tau\in (-\infty,-n)$ , the reduced Hamiltonian\footnote{The reduced Hamiltonian in \eqref{red-ham}, relevant in the present setting $\Lambda>0$, corresponds to the reduced Hamiltonian \eqref{red-ham-org} for the case $\Lambda=0$.}
\eq{\label{red-ham}
H^{-}_{red}=\left(\frac{\tau^2}{n}-n\right)^{n/2}\mathrm{vol}_{g_t} 
}
is monotonically decreasing and stays constant if and only if $g_t$ is a family of (up to isometry) homothetic metrics which are Einstein metrics of negative scalar curvature.
\end{thm}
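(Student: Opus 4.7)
The approach is a direct differentiation of $H^{-}_{red}$ along the CMC Einstein flow. First, using the $g$-evolution in \eqref{CMC} together with $\int_M\mathrm{div}(X)\,\mu_g = 0$ on the closed manifold, one gets $\partial_t\mathrm{vol}_{g_t} = -\tau\int_M N\,\mu_g$. Combined with $\partial_t(\tau^2/n - n)^{n/2} = \tau(\tau^2/n-n)^{n/2-1}$ (since $\tau = t$), this yields
\begin{align*}
\partial_t H^{-}_{red} = \tau(\tau^2/n-n)^{n/2-1}\bigl[\mathrm{vol}_{g_t} - (\tau^2/n-n)\textstyle\int_M N\,\mu_g\bigr].
\end{align*}
Integrating the lapse equation $\Delta N = -1 + N[|\Sigma|_g^2 + \tau^2/n - n]$ over $M$ rewrites the bracket as $\int_M N|\Sigma|_g^2\,\mu_g$. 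Under the hypothesis $\tau < -n$ (giving $\tau < 0$ and $\tau^2/n - n > 0$) and with $N > 0$ (from the maximum principle applied to the lapse equation), this gives
\begin{align*}
\partial_t H^{-}_{red} = \tau(\tau^2/n-n)^{n/2-1}\int_M N|\Sigma|_g^2\,\mu_g \leq 0,
\end{align*}
with equality precisely when $\Sigma \equiv 0$.

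For the rigidity part, constancy of $H^{-}_{red}$ on an interval forces $\Sigma \equiv 0$ and $\partial_t\Sigma \equiv 0$ throughout. The Hamiltonian constraint then gives $R(g_t) = (n-1)(n-\tau^2/n)$, spatially constant and strictly negative. Feeding $\Sigma = 0$ and $\partial_t\Sigma = 0$ into the $\Sigma$-evolution equation of \eqref{CMC} produces the Ricci--Hessian identity
\begin{align*}
NR_{ij} = \nabla_i\nabla_j N + \bigl[\tfrac{1}{n} - N(\tau^2/n-n)\bigr]g_{ij},
\end{align*}
whose trace-free part reads $N(R_{ij} - \tfrac{R}{n}g_{ij}) = \nabla_i\nabla_j N - \tfrac{\Delta N}{n}g_{ij}$. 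I would test this against $R^{ij} - \tfrac{R}{n}g^{ij}$, integrate by parts twice, and invoke the contracted second Bianchi identity together with $R$ being spatially constant to conclude $\int_M N|R_{ij} - \tfrac{R}{n}g_{ij}|^2\,\mu_g = 0$. Hence $g_t$ is Einstein with negative scalar curvature.

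With $g_t$ Einstein, the traceless identity reduces to the Obata-type equation $\nabla_i\nabla_j N = \tfrac{\Delta N}{n}g_{ij}$. Taking one divergence and commuting covariant derivatives yields $\nabla_j\Delta N = -\tfrac{R}{n-1}\nabla_j N$. Pairing with $\nabla^j N$ and integrating by parts gives
\begin{align*}
\int_M(\Delta N)^2\,\mu_g = \tfrac{R}{n-1}\int_M|\nabla N|^2\,\mu_g,
\end{align*}
a balance which, because $R < 0$, forces both sides to vanish; thus $N$ is spatially constant. The evolution $\partial_t g = -\tfrac{2N\tau}{n}g + \mathcal{L}_X g$ with $N$ constant then shows that, modulo the time-dependent diffeomorphism generated by $X$, $g_t$ is a pure conformal rescaling of a fixed negative Einstein metric, which is the asserted family of homothetic Einstein metrics of negative scalar curvature.

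The main obstacle is this final identification: on a general compact Einstein manifold the Obata equation admits non-constant solutions (the round sphere), so the Hessian identity alone is insufficient. It is essential to exploit the sign $R < 0$ inherited from $\tau^2 > n^2$ via the Bochner-type identity above. Everything else is a direct unpacking of the flow equations, the Hamiltonian constraint, and integration by parts on the closed manifold $M$.
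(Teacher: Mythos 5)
Your monotonicity argument is exactly the paper's: differentiate, use $\partial_t\mathrm{vol}_{g_t}=-\tau\int_M N\,\mu_g$, integrate the lapse equation to trade the bracket for $\int_M N|\Sigma|^2\mu_g$, and use $\tau<0$, $N>0$. The rigidity part, however, takes a genuinely different and longer route. The paper exploits that for $\tau^2/n-n>0$ the operator $\Delta-(\tau^2/n-n)$ is strictly negative, hence invertible, so the lapse equation with $\Sigma=0$ has the \emph{unique} solution $N\equiv(\tau^2/n-n)^{-1}$; the Hessian term in the $\Sigma$-evolution then vanishes identically and the Einstein condition $R_{ij}=-N^{-1}\tfrac{n-1}{n}g_{ij}$ is read off pointwise, with no integration by parts. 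You instead keep $N$ unknown, first prove Einstein-ness by pairing the trace-free Ricci--Hessian identity with $R_{ij}-\tfrac{R}{n}g_{ij}$ and using the contracted Bianchi identity with $R$ constant, and then kill $\nabla N$ by an Obata/Bochner argument that uses $R<0$. Both are correct (your Bianchi and Bochner computations check out, and $R<0$ does follow from the Hamiltonian constraint with $\tau^2>n^2$), but the paper's observation that the lapse operator is an isomorphism short-circuits all of this; notably, your strategy is essentially the one the paper is forced to use in the positive Yamabe case (Theorem \ref{theorem2}), where the analogous operator $\Delta-(\tau^2/n-n)$ can fail to be invertible and only the Ricci--Hessian identity \eqref{eq_Ricci} survives without an extra eigenvalue hypothesis. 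Your closing worry about non-constant Obata solutions on the sphere is resolved either by your own sign argument or, more directly, by the uniqueness of the lapse.
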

\begin{proof}
The time-derivative of the reduced Hamiltonian is
\eq{
\partial_{t}H^{-}_{red}=\partial_{\tau}H^{-}_{red}=
\left(\frac{\tau^2}{n}-n\right)^{n/2-1}\tau
\left(\mathrm{vol}_{g_t}-(\frac{\tau^2}{n}-n)\int_M N \mu_{g_t}\right).
}
By integrating the equation on the lapse function, we can replace the right hand side and we get
\eq{
\partial_{t}H^{-}_{red}=\left(\frac{\tau^2}{n}-n\right)^{n/2-1}\tau\int_M N|\Sigma|^2\mu_{g_t}\leq0
}
which proves the first part of the theorem.
Suppose now that the Hamiltonian stays constant.
Then $\Sigma=0$, since $N$ is positive. Since $\Sigma=0$ holds on
any time interval, where the reduced Hamiltonian is constant, we also have $\partial_t\Si=0$ on this interval. Since the operator $\Delta-(\frac{\tau^2}{n}-n)$
is invertible, the equation on the lapse function is uniquely solvable. Thus, $N=(\frac{\tau^2}{n}-n)^{-1}$.
If we now insert $\Sigma=\partial_t\Sigma=0$ in the evolution equation for $\Sigma$, we see that the equation $R_{ij}=-N^{-1}\frac{n-1}{n}g_{ij}$ holds for all time.
It follows from the evolution equation of the metric, that the $g_t$ are homothetic up to isometry.
\end{proof}

\begin{thm}\label{theorem2}
Let $g_t$ be a solution of the reversed CMC Einstein flow with positive cosmological constant $\Lambda=\frac{n(n-1)}{2}$ which is expanding (i.e. $\tau<0$). Then, if $\tau\in (-n,n)$ the reduced Hamiltonian
\eq{
H^{+}_{red}=\left(n-\frac{\tau^2}{n}\right)^{n/2}\mathrm{vol}_{g_t}}
is monotonically increasing. If $H^{+}_{red}$ is constant along the flow, $g_t$ is a family of constant positive scalar curvature metrics which lie up to diffeomorphism in the same conformal class.
The Ricci tensor of $g_t$ satisfies the equation
\begin{align}\label{eq_Ricci}
N(R_{ij}+(\frac{\tau^2}{n}-n)g_{ij})+\frac{1}{n}g_{ij}-\nabla^2_{ij}N=0.
\end{align}
 Moreover, if $-R(g_t)/(n-1)$ is not an eigenvalue of $\Delta_{g_t}$, 
$g_t$ is a family of homothetic metrics which are Einstein metrics of positive scalar curvature.
\end{thm}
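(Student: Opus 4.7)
My plan mirrors the proof of the preceding theorem, with care taken for the reversed CMC setting.

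\emph{Step 1 (monotonicity).} I would first compute $\partial_t H^+_{red}$ by the product rule. Using $\partial_t g_{ij} = -2N(\Sigma_{ij} + (\tau/n) g_{ij}) + \mathcal{L}_X g_{ij}$, the trace-freeness of $\Sigma$ and the vanishing of the divergence of $X$ after integration, one finds $\partial_t \mathrm{vol}_{g_t} = -\tau \int_M N \mu_g$; combined with $\partial_t \tau = -1$ in the reversed CMC gauge this gives
\[
\partial_t H^+_{red} = \tau \left(n - \frac{\tau^2}{n}\right)^{n/2-1} \left[\mathrm{vol}_{g_t} - \left(n - \frac{\tau^2}{n}\right)\int_M N \mu_g\right].
\]
The next step is to integrate the reversed CMC lapse equation \eqref{inverseCMC} over $M$; this rewrites the bracket as $-\int_M N \abg{\Sigma}^2 \mu_g$, yielding
\[
\partial_t H^+_{red} = -\tau \left(n - \frac{\tau^2}{n}\right)^{n/2-1}\int_M N \abg{\Sigma}^2 \mu_g \geq 0
\]
since $\tau < 0$ and $N > 0$, which is the monotonicity statement.

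\emph{Step 2 (equality case).} If $H^+_{red}$ is constant on a time interval then the integrand must vanish, so $\Sigma \equiv 0$ there, and consequently $\partial_t \Sigma = 0$. Substituting $\Sigma = \partial_t \Sigma = 0$ into the reversed CMC evolution equation for $\Sigma$ produces exactly \eqref{eq_Ricci}. With $\Sigma = 0$ the metric evolution reads $\partial_t g_{ij} = -(2N\tau/n) g_{ij} + \mathcal{L}_X g_{ij}$, so the $g_t$ stay in a single conformal class modulo the diffeomorphism flow generated by $X$. I would then take the trace of \eqref{eq_Ricci} and use the lapse equation with $\Sigma=0$, $\Delta N = 1 + N(\tau^2/n - n)$, to eliminate $\Delta N$; after dividing by $N>0$ this yields $R = (n-1)(n - \tau^2/n)$, a positive spatial constant, so $g_t$ has constant positive scalar curvature as claimed.

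\emph{Step 3 (moreover clause).} I would rewrite the reversed CMC lapse equation as $(\Delta + (n - \tau^2/n))N = 1$. The assumption that $-R/(n-1) = -(n - \tau^2/n)$ is not an eigenvalue of $\Delta_{g_t}$ is precisely the invertibility of $\Delta + (n - \tau^2/n)$; since the constant $N_0 = n/(n^2 - \tau^2)$ is a solution it must be the unique one, so $N$ is spatially constant. With $\nabla^2 N = 0$, equation \eqref{eq_Ricci} collapses to $R_{ij} = \tfrac{n-1}{n}(n - \tau^2/n) g_{ij}$, giving the Einstein property with positive scalar curvature. Since $-2N\tau/n$ is now a spatial constant, $\partial_t g_{ij} = -(2N\tau/n) g_{ij} + \mathcal{L}_X g_{ij}$ exhibits the $g_t$ as pairwise homothetic modulo the flow of $X$. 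The main subtlety is bookkeeping the reversed sign in the lapse equation: this sign flip is what turns the CMC "decreasing" statement into the present "increasing" one and is used twice, both in producing the nonnegative expression in Step 1 and in eliminating $\Delta N$ from the traced Ricci equation in Step 2.
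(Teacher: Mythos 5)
Your proposal is correct and follows essentially the same argument as the paper: differentiate $H^+_{red}$, integrate the reversed lapse equation \eqref{inverseCMC} to obtain $-\tau\left(n-\frac{\tau^2}{n}\right)^{n/2-1}\int_M N|\Sigma|^2\mu_{g_t}\geq 0$, and in the equality case set $\Sigma=\partial_t\Sigma=0$ to read off \eqref{eq_Ricci}, the conformal/constant-scalar-curvature statement, and (under the eigenvalue hypothesis) $N=\left(n-\frac{\tau^2}{n}\right)^{-1}$ and the Einstein condition. The only cosmetic difference is that you obtain $R=(n-1)\left(n-\frac{\tau^2}{n}\right)$ by tracing \eqref{eq_Ricci} against the lapse equation, whereas the paper reads it off directly from the Hamiltonian constraint with $\Sigma=0$; both are immediate.
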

\begin{proof}
The time derivative is
\eq{
\partial_{t}H^{+}_{red}=\partial_{-\tau}H^{+}_{red}=
\left(n-\frac{\tau^2}{n}\right)^{n/2-1}(-\tau)
\left(-\mathrm{vol}_{g_t}+(n-\frac{\tau^2}{n})\int N \mu_{g_t}\right).
}
By integrating \eqref{inverseCMC}, we can replace the right hand side by
\eq{
\partial_{t}H^{+}_{red}=\left(n-\frac{\tau^2}{n}\right)^{n/2-1}(-\tau)\int_M N|\Sigma|^2\mu_{g_t}\geq0,
}
which proves the monotonicity.
As above, $\Sigma=\partial_t\Sigma_t=0$ as long as the Hamiltonian stays constant. By the Hamiltonian constraint and the evolution equation of the metric,
\eq{R(g)=(n-1)(n-\frac{\tau^2}{n}),\qquad \partial_tg=-2N\frac{\tau}{n}g+\mathcal{L}_X g,}
which proves the first assertion in the equality case since the equation on the Ricci tensor follows immediately from the evolution equation of $\Sigma$.

 If $-R(g_t)/(n-1)$ is not an eigenvalue of $\Delta_{g_t}$, the equation on the lapse is uniquely solvable and so, 
$N=(n-\frac{\tau^2}{n})^{-1}$. By the evolution equation for $\Sigma$, this immediately yields
$R_{ij}=N^{-1}\frac{n-1}{n}g_{ij}$. As in the previous theorem, the metrics $g_t$ are homothetic up to isometry in this case.
\end{proof}
\begin{rem}
	It would be interesting to see if there are nontrivial examples (i.e.\ not Einstein) of constant scalar cuvature metrics on a compact manifold which satisfy \eqref{eq_Ricci}.
\end{rem}

\subsection{Critical points on the reduced phase space}
Since these Hamiltonians are (up to the time-dependant scale factor) the same as in the case of vanishing cosmological constant \cite{FM02}, one can also do the same analysis as in this paper.
Suppose the  manifold $M$ is of negative Yamabe-type, i.e.\ the scalar curvature of any metric on $M$ is negative somewhere. Then in any conformal class, there is a unique metric of constant scalar curvature $-1$. By the standard conformal method, we may therefore consider the reduced phase space
\eq{\mathcal{P}^{-}_{red}=\left\{(g,\Sigma)\in\mathcal{M}\times \Gamma(S^2M)\mid R(g)=-1,\trace_g\Sigma=0, (\nabla_g)^i\Sigma_{ij}=0 \right\}.
}
If we drop the assumption of negative Yamabe-type, then the uniqueness statement about metrics of fixed scalar curvature in a conformal class fails. On the other hand, given a metric $g$ close enough to an Einstein metric of constant scalar curvature $1$ which is not the round sphere,
then there exists a unique metric of scalar curvature $1$ in the conformal class of $g$ \cite[Theorem C]{BWZ04}. Therefore, we may in this case introduce the phase space 
\eq{\mathcal{P}^{+}_{red}=\left\{(g,\Sigma)\in\mathcal{M}\times \Gamma(S^2M)\mid R(g)=1,\trace_g\Sigma=0, (\nabla_g)^i\Sigma_{ij}=0 \right\}.
}
One may now regard the reduced Hamiltonians $H^{\pm}_{red}$ as functionals on the Phase spaces $\mathcal{P}^{\pm}_{red}$ and develop their variational theory as in \cite[Section 4]{FM02}. The critical points are the pairs $(g,\Sigma)$ where $\Sigma=0$ and $\ric_{g}=\pm \frac{1}{n}g$, respectively. The Hessian of $H^{-}_{red}$ at a critical point will be positive semidefinite if and only if the Einstein operator $\mathcal{L}_{g,g}$ is positive semidefinite on $TT$-tensors (transverse and traceless tensors) whereas the Hessian of $H^{+}_{red}$ at critical points will be negative semidefinite if and only if $\mathcal{L}_{g,g}$ is positive semidefinite on $TT$-tensors.

 In the case of negative Yamabe-type, the global infimum of $H^{-}_{red}$, will furthermore determine the topological sigma constant which is also known as the Yamabe invariant  (c.f.\ \cite[Theorem 6]{FM02}). In the case of positive Yamabe-type, this is however not true because the phase space is only defined locally around a small neighbourhood of a positive Einstein metric.
 
 By the moniticity property of the
  reduced Hamiltonian, this suggest that Einstein spaces are attractors of the Einstein flow with positive cosmological constant if the Einstein operator  is positive semidefinite on $TT$-tensors.

\subsection{Remarks}
The features of the reduced Hamiltonian for the CMC-Einstein-$\Lambda$ flow and the reversed version, respectively, imply heuristic conclusions for the long time behaviour of these flows for arbitrary initial data. For the negative Yamabe case, the only stable fixed points seem to be negative Riemannian Einstein metrics. For large data, however, it may be that the flow is incomplete in both time directions, i.e. singularities form. 

The similar statement holds for the positive Yamabe class, where Einstein manifolds are replaced by the more general conditions listed in Theorem \ref{theorem2}.

\subsection*{Acknowledgements}
D.F. acknowledges support of the Austrian Science Fund (FWF) project P29900-N27 \emph{Geometric Transport equations and the non-vacuum Einstein-flow}. We thank the anonymous referees for suggestions.

\vspace{0.1cm}
\noindent
\textsc{David Fajman\\
Faculty of Physics, University of Vienna,\\
Boltzmanngasse 5, 1090 Vienna, Austria}\\ 
\vspace{-0.4cm}\\
\texttt{David.Fajman@univie.ac.at}\\

\noindent
\textsc{Klaus Kr\"oncke\\
Department of Mathematics, University of Hamburg,\\
Bundesstrasse 55, 20146 Hamburg, Germany}\\ 
\vspace{-0.4cm}\\
\texttt{klaus.kroencke@uni-hamburg.de}\\


\begin{thebibliography}{DWW07}


\providecommand{\url}[1]{\texttt{#1}}
\expandafter\ifx\csname urlstyle\endcsname\relax
  \providecommand{\doi}[1]{doi: #1}\else
  \providecommand{\doi}{doi: \begingroup \urlstyle{rm}\Url}\fi



\bibitem[AM04]{AM04}
\textsc{Andersson}, Lars ; \textsc{Moncrief}, Vincent:
\newblock {Future complete vacuum spacetimes}
\newblock {In: }\emph{The {E}instein equations and the large scale behavior of
              gravitational fields} {Birkh\"auser, Basel} (2004),  299--330

\bibitem[AM11]{AM11}
\textsc{Andersson}, Lars ; \textsc{Moncrief}, Vincent:
\newblock {Einstein spaces as attractors for the Einstein flow.}
\newblock {In: }\emph{J. Differ. Geom.} \textbf{89} (2011), no. 1, 1--47
%

\bibitem[Be08]{Bes08}
\textsc{Besse}, Arthur~L.:
\newblock \emph{{Einstein manifolds. Reprint of the 1987 edition.}}
\newblock {Berlin: Springer}, 2008

\bibitem[BWZ04]{BWZ04}
\textsc{B\"ohm}, Christoph ; \textsc{Wang}, McKenzie~Y.  ; \textsc{Ziller},
  Wolfgang:
\newblock {A variational approach for compact homogeneous Einstein manifolds.}
\newblock {In: }\emph{Geom. Funct. Anal.} \textbf{14} (2004), no. 4, 681--733

\bibitem[FK15]{FK15}
\textsc{Fajman}, D.; \textsc{Kr\"oncke}, K.:
\newblock {Stable fixed points of the Einstein flow with positive cosmological constant }, 
\newblock to appear in \emph{Comm.~Anal.~Geom.},
   (2015)

\bibitem[FM01]{FM01}
\textsc{Fischer}, Arthur; \textsc{Moncrief}, Vince:
\newblock {The reduced {E}instein equations and the conformal volume
              collapse of 3-manifolds}, \newblock{In: }\emph{Class.~Quant.~Grav.},
   \textbf{18},
   (2001), {4493--4515}

\bibitem[FM02]{FM02}
\textsc{Fischer}, Arthur; \textsc{Moncrief}, Vince:
\newblock {Hamiltonian reduction and perturbations of continuously self-similar $n+1$-dimensional {E}in stein vacuum spacetimes}, \newblock{In: }\emph{Class.~Quant.~Grav.},
   \textbf{19},
   (2002), {5557--5589}


\bibitem[Re08]{Re08}{
\textsc{Rendall, Alan},
\newblock{Partial Differential Equations in General Relativity},
\newblock{Oxford Graduate Texts in Mathematics, \emph{Oxford University Press}, Oxford}
\newblock{2008}
}

\bibitem[Ri08]{Ri08}{
   \textsc{Ringstr{\"o}m, Hans},
  \newblock{Future stability of the Einstein-non-linear scalar field system},
   \newblock {In: }\emph{Invent. Math.},
   \textbf{173},
   (2008), {123--208}}


\bibitem[Sh89]{Sh89}
\textsc{Shi}, Wan-Xiong 
\newblock {Deforming the metric on complete Riemannian manifolds}
\newblock {In: }\emph{J.\ Diff.\ Geom.} \textbf{30} (1989), no.\ 1, 223--301


\end{thebibliography}
\end{document}